\documentclass[conference]{IEEEtran}

\usepackage{amsmath,amssymb,amsthm}
\usepackage{graphicx}
\usepackage{booktabs}
\usepackage{algorithm}
\usepackage{algorithmic}
\usepackage{cite}
\usepackage{xcolor}
\usepackage[caption=false,font=footnotesize]{subfig}
\usepackage{url}
\usepackage[T1]{fontenc}
\usepackage[utf8]{inputenc}
\usepackage[hidelinks]{hyperref}

\newtheorem{theorem}{Theorem}
\newtheorem{lemma}[theorem]{Lemma}
\newtheorem{proposition}[theorem]{Proposition}
\newtheorem{corollary}[theorem]{Corollary}

\newtheorem{remark}[theorem]{Remark}

\newcommand{\nqubits}{n}
\newcommand{\shots}{s}

\title{Polynomial-Resource Classification of Quantum Circuit Families via Classical Shadows}

\author{
\IEEEauthorblockN{Andrew Maciejunes}
\IEEEauthorblockA{
Virginia Modeling, Analysis, and Simulation Center\\
Old Dominion University\\
Norfolk, VA, USA\\
amacieju@odu.edu
}
\and
\IEEEauthorblockN{Ross Gore}
\IEEEauthorblockA{
Center for Secure and Intelligent Critical Systems\\
Old Dominion University\\
Norfolk, VA, USA\\
rgore@odu.edu
}
\and
\IEEEauthorblockN{Sachin Shetty}
\IEEEauthorblockA{
    Center for Secure and Intelligent Critical Systems \\
    Professor, Department of Computer and Electrical Engineering \\ 
    Old Dominion University \\
    Norfolk, VA, USA \\
    sshetty@odu.edu
}
\and
\IEEEauthorblockN{Barry Ezell}
\IEEEauthorblockA{
    Virginia Modeling, Analysis, and Simulation Center \\
    Old Dominion University \\
    Norfolk, VA, USA \\
    bezell@odu.edu
}
}

\begin{document}

\maketitle

\begin{abstract}
We compare four polynomial-resource measurement strategies, (I) $Z$-basis-only, (II) nearest-neighbor $ZZ$ (NN), (III) multi-basis ($Z$, $X$, $Y$), and (IV) classical shadows, for classifying three quantum circuit families: IQP, Clifford, and Clifford$+T$.
We find $Z$-only measurements outperform multi-basis and classical shadows across all qubit counts and all four classifiers evaluated, and the $O(\nqubits)$-feature NN strategy matches $Z$-only to within $0.02$ in Random Forest accuracy.
The best result is a Random Forest accuracy of $0.91$ at 4--5 qubits under $Z$-only ($0.89$ for NN, $0.85$ for multi-basis, $0.67$ for shadows).
All four strategies collapse to near-chance accuracy ($\approx 0.33$) above approximately 12 qubits under the quadratic shot budget $\shots = 16\nqubits^2$.
These findings indicate that the discriminative signal between these circuit families is concentrated in local, nearest-neighbor $Z$-basis correlations, consistent with the diagonal gate structure of IQP circuits, and that additional Pauli correlator types or long-range correlations carry no compensating discriminative power for this task.
We support these findings with two exact results: a frame identity implying that every purely-$X$ observable of an IQP output state vanishes identically, and an estimator-variance analysis showing that, at equal shot budget, the multi-basis and classical-shadow protocols estimate each $ZZ$ correlator with exactly $3\times$ and at least $9\times$ the variance of $Z$-only measurement, respectively.
These results establish that a quadratic shot budget is insufficient for reliable classification above approximately 12 qubits, but do not rule out the existence of a subquadratic or otherwise more efficient polynomial-resource strategy; whether any polynomial measurement protocol can classify these families at large qubit counts remains an open question.
These results motivate a formal open question: whether any BPP algorithm can reliably distinguish IQP-generated distributions from classically simulable families at large qubit counts.
\end{abstract}

\begin{IEEEkeywords}
classical shadows, quantum circuit families, IQP circuits, Clifford circuits, polynomial resources, distribution classification, Pauli correlators, quantum machine learning
\end{IEEEkeywords}

\section{Introduction}
\label{sec:intro}

A fundamental question in quantum information is whether the output distributions of quantum circuits, distributions that cannot be efficiently represented classically, can be meaningfully characterized using only polynomially many measurements \cite{montanaro2013survey}. 
Quantum state tomography, originally proposed by Vogel and Risken (1989) \cite{vogel1989determination} and first demonstrated experimentally by Smithey et al.\ (1993) \cite{smithey1993measurement}, provides a complete reconstruction of the quantum state but at prohibitive cost. 
Haah et al.\ (2016) \cite{haah2016sample} rigorously established this cost with a lower bound of $\Omega(d^2/\delta)$ copies, where $d$ is the Hilbert space dimension and $\delta$ is the desired precision. 
Since $d = 2^n$ for an $n$-qubit system, full tomography requires exponentially many measurements in the number of qubits, rendering it infeasible for systems of practical interest. 
This exponential barrier motivates the search for protocols that extract meaningful information about quantum states without full reconstruction.

Quantum circuit families provide a principled testbed for this question.
IQP, Clifford, and Clifford$+T$ circuits produce distributions with qualitatively different statistical signatures, rooted in their distinct computational complexity properties.
Clifford circuits are classically simulable; IQP sampling is believed to be classically hard under plausible complexity-theoretic conjectures~\cite{bremner2016average}; Clifford$+T$ circuits, a universal gate set \cite{boykin1999universal, bravyi2005universal}, interpolate between the two.
If these families can be distinguished with polynomially many measurements, it suggests that their distributional signatures are encoded in low-order statistics that a polynomial measurement protocol can access.

The key methodological challenge is choosing the right measurement protocol.
Measuring in a single basis ($Z$-only) gives access to $ZZ$ correlators but misses the $X$ and $Y$ components of the Pauli correlation structure.
Measuring in three global bases ($Z$, $X$, $Y$) recovers $ZZ$, $XX$, and $YY$ correlators but requires $3\times$ the shot budget and still misses mixed-basis correlators ($XY$, $XZ$, $YZ$).
Classical shadows resolve this. By applying a random single-qubit Clifford rotation before measurement, all six 
two-qubit Pauli correlators can be estimated from a single unified protocol with sample complexity 
$O(\log n^2) = O(\log n)$ to predict all pairwise correlators simultaneously, significantly more efficient 
than the $O(n^2)$ shots required to estimate each correlator type independently.
In this work, all four measurement strategies share the same $O(\nqubits^2)$ shot budget to ensure a fair experimental comparison; under this budget classical shadows accesses all six Pauli correlator types simultaneously, while $Z$-only, NN, and multi-basis access fewer.

\paragraph*{Contributions}
\begin{itemize}
    \item An empirical comparison of four measurement strategies ($Z$-only, nearest-neighbor $ZZ$, multi-basis, and classical shadows) for classifying IQP, Clifford, and Clifford$+T$ circuit families, showing that $Z$-only and NN outperform multi-basis and shadows and inverting the hypothesis that richer Pauli correlation structure would benefit classification.
    \item Evidence that the discriminative signal is concentrated in local, nearest-neighbor $ZZ$ correlations: an $O(\nqubits)$-feature NN strategy matches the $O(\nqubits^2)$-feature $Z$-only strategy to within $0.02$ in Random Forest accuracy, consistent with the diagonal computational-basis structure of IQP circuits.
    \item Characterization of a scaling collapse near 12 qubits at which all implemented polynomial strategies degrade to near-chance performance under the quadratic shot budget, establishing a practical limit for this feature set and shot scaling law.
    \item Two exact theoretical results (Lemma~\ref{lem:frame}, Propositions~\ref{prop:xnull} and~\ref{prop:variance_ratio}, and Corollary~\ref{cor:variance_ordering}): every purely-$X$ observable of an IQP output state is identically zero, and at equal shot budget the multi-basis and shadow protocols estimate each $ZZ$ correlator with exactly $3\times$ and at least $9\times$ the variance of $Z$-only measurement. Together these explain the observed strategy ordering without unproven assumptions.
\end{itemize}

\section{Background}
\label{sec:background}

\subsection{Circuit Families as Distribution Classes}

We consider three quantum circuit families that span a range of computational complexity.

\paragraph{Clifford Circuits}
Generated by $\{H, S, \mathrm{CNOT}\}$, Clifford circuits are efficiently simulable classically via the stabilizer formalism~\cite{gottesman1997stabilizer}; this gate set is provably minimal and sufficient for generating the full Clifford group in any finite dimension~\cite{farinholt2014ideal}.
Their output distributions possess structured pairwise correlations reflecting the underlying stabilizer group geometry.

\paragraph{Clifford$+T$ Circuits}
Adding $T = \mathrm{diag}(1, e^{i\pi/4})$ yields a universal gate set~\cite{boykin1999universal}.
Sufficient $T$-gate density breaks the stabilizer structure and moves the circuit outside the efficiently simulable regime, producing a distributional signature intermediate between Clifford and IQP.

\paragraph{Instantaneous Quantum Polynomial Circuits}
IQP circuits apply diagonal gates in the computational ($Z$) basis, sandwiched by Hadamard layers~\cite{shepherd2008instantaneous}.
Alternatively, they can be defined as being diagonal in the $X$ basis, omitting the Hadamard layers.
Classical hardness of IQP sampling is supported by complexity-theoretic arguments~\cite{bremner2016average}, and their 
parity structure produces a distributional fingerprint qualitatively distinct from the Clifford family.

\subsection{Classical Shadows}

Classical shadows~\cite{huang2020predicting} provide an efficient protocol for estimating many properties of a quantum state from a single randomized measurement dataset.
For each measurement shot, a random single-qubit unitary $U_i$ is applied to each qubit $i$ before measurement in the computational basis, and the rotation choice is recorded alongside the outcome.
The measurement channel can then be inverted classically to produce unbiased estimators for arbitrary observables.

Variants of this protocol have been proposed to reduce estimator variance by biasing measurements toward a known Hamiltonian structure \cite{hadfield2022measurements}, 
a principle that motivates the variance analysis in Section~\ref{sec:theory}. Efficient classical data structures for shadow measurement under shallow circuits have also been studied \cite{hillmich2021decision}, 
with decision diagrams shown to reduce the classical post-processing cost of shadow estimation.

For a single-qubit Pauli $P_i \in \{X, Y, Z\}$, the shadow estimator is:
\begin{equation}
\hat{P}_i = 3 \cdot (-1)^{b_i} \cdot \mathbf{1}[U_i \text{ matches } P_i\text{'s basis}],
\label{eq:shadow_estimator}
\end{equation}
where the factor of 3 corrects for the $1/3$ probability of any given basis being selected.
For a two-qubit correlator $P_i P_j$, the estimator is the product of single-qubit estimators, yielding a factor of $9$ and requiring both $U_i$ and $U_j$ to match their respective Pauli bases.

The connected correlator $\langle P_i P_j \rangle_c = \langle P_i P_j \rangle - \langle P_i \rangle \langle P_j \rangle$ can be estimated by combining single- and two-qubit shadow estimators.
For $K$ Pauli operators, the required number of shadows scales as $O(K / \varepsilon^2)$ for bounded observables, independent of system size~\cite{huang2020predicting}.
For all $6 \cdot \binom{\nqubits}{2}$ two-qubit Pauli correlators, this yields a total shot cost of $O(\nqubits^2 / \varepsilon^2)$, polynomial in $\nqubits$.

\begin{figure*}[!t]
\centering
\subfloat[Clifford]{\includegraphics[width=0.25\textwidth,height=4cm,keepaspectratio]{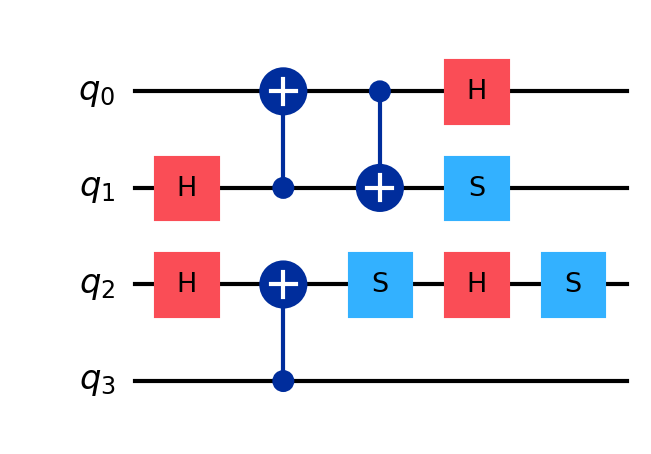}%
\label{fig:clifford}}
\hfil
\subfloat[Clifford$+T$]{\includegraphics[width=0.25\textwidth,height=4cm,keepaspectratio]{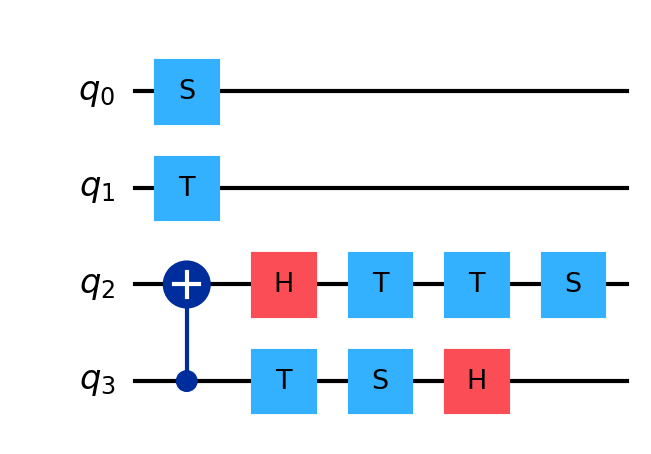}%
\label{fig:cliffordT}}
\hfil
\subfloat[IQP]{\includegraphics[width=0.35\textwidth,height=4cm,keepaspectratio]{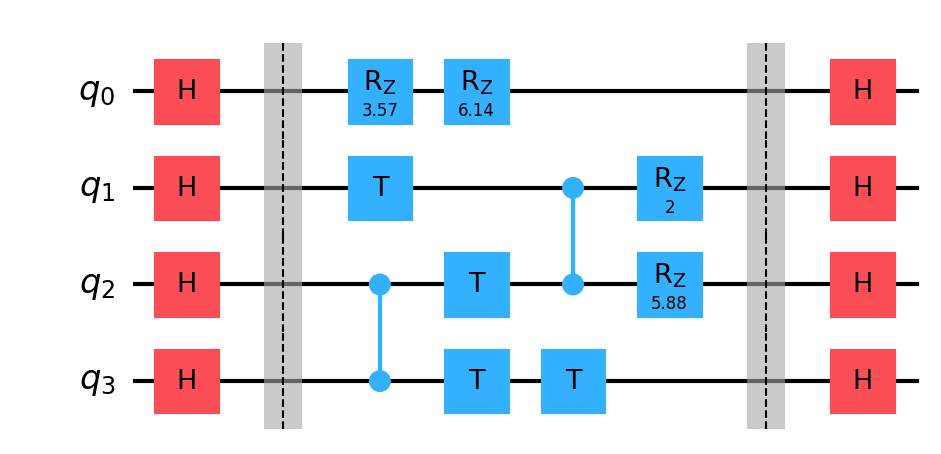}%
\label{fig:iqp}}
\caption{We illustrate example Clifford, Clifford$+T$, and IQP circuits for the reader with four
qubits and $N_c$ = 10. (a) The Clifford circuit is comprised of gates from the Clifford gate set
as defined above. (b) We observe a disproportionate amount of $T$ gates for the Clifford$+T$ example. This is done to 
ensure distinctness from the Clifford circuit family. (c) We show an example IQP circuit. It is 
clear that $N_c$ for IQP circuits refers to the number of operations between the Hadamard gates.}
\label{fig:circuit_families}
\end{figure*}

\subsection{Distribution Classification}

Distribution testing and quantum state discrimination provide the theoretical foundations for circuit family classification.
Classical distribution testing asks whether two sample sequences were drawn from the same distribution, with sample complexity that depends on the alphabet size and closeness notion~\cite{Canonne2020}.
In the quantum setting, quantum state discrimination minimizes the probability of misidentifying one of several candidate states given a fixed number of copies~\cite{Barnett2009}, while quantum hypothesis testing establishes error exponents for distinguishing between two states in the asymptotic copy limit~\cite{Audenaert2007}.
These frameworks are information-theoretically tight but assume access to the full quantum state; when only classical measurement outcomes are available the problem reduces to distinguishing between high-dimensional probability distributions from samples.

More directly related is the body of work on circuit fingerprinting and spoofing detection.
Aaronson and Chen~\cite{AaronsonChen2017} propose linear cross-entropy benchmarking as a statistical test for quantum supremacy claims, exploiting the anticorrelation between a classical simulator's probability assignments and typical bitstring frequencies.
Barak et al.~\cite{BarakEtAl2021} show that low-degree polynomial estimators cannot efficiently distinguish random quantum circuit outputs from uniform noise at large system size, establishing hardness results for feature-based discrimination.
Hangleiter et al.~\cite{HangleiterEtAl2019} study anticoncentration and show that IQP and random circuit families exhibit qualitatively different concentration behavior, a property reflected in Hamming-weight statistics.
The present work complements these results by providing a systematic empirical comparison of four measurement strategies ($Z$-only, nearest-neighbor $ZZ$, multi-basis, and classical shadows) 
for circuit family classification, and is, to our knowledge, the first study to directly benchmark all four strategies against 
each other on the IQP, Clifford, and Clifford$+T$ families across a range of qubit counts.

\section{Methods}
\label{sec:methods}

\subsection{Circuit Generation}

We use the Qiskit SDK \cite{qiskit2024} for all of our research. Using Qiskit, we craft random circuits 
for each of the three previously mentioned circuit families, Clifford, Clifford$+T$, and IQP.
For the first two groups, we define a value $N_c$, which is the number of gates in a given circuit. We
then apply $N_c$ randomly selected gates from the appropriate circuit family to act on randomly selected 
qubits. For the Clifford family of circuits, each gate has an equal chance of being selected. For the 
Clifford$+T$ family, the gates have chances of being selected as outlined in Table~\ref{tab:cliffordT_gate_distros}.
The qubit(s) each gate acts on are selected randomly with no preference to any particular qubit, except 
in the case of two-qubit gates, where the control and target must be different.
The elevated $T$-gate probability of $0.4$ is chosen so that the expected $T$ count per circuit ($0.4\,N_c = 400$) places sampled Clifford$+T$ circuits far outside the regime in which stabilizer-based simulation is efficient, since classical simulation cost grows exponentially with the number of $T$ gates~\cite{bravyi2016improved}; a sensitivity sweep over the $T$-gate density is left to future work.

\begin{table}[htbp]
\centering
\caption{Clifford$+T$ gate selection probabilities. All other families used equal probabilities.}
\label{tab:cliffordT_gate_distros}
\begin{tabular}{|c|c|}\hline
\textbf{Gate} & \textbf{Probability} \\ \hline
 S & 0.2 \\ \hline
 CNOT & 0.2 \\ \hline
 H & 0.2 \\ \hline
 T & 0.4 \\ \hline
\end{tabular}
\end{table}

IQP circuits are created similarly. The only difference, beside the 
gate set used, is that the operator $H^{\otimes \nqubits}$ is applied at the beginning and end of the circuit, as 
shown in Figure~\ref{fig:circuit_families}, where $n$ is the number of qubits. We scale the number of qubits from 4 up to 20 qubits. At each qubit count, we generate 1000 circuits of each 
type. IQP circuits are defined by any combination of $Z$-diagonal gates. In this work, we specifically use $\{T, RZ, \mathrm{CZ}\}$. The rotation angle for the RZ gate is determined randomly 
for each instance of the gate in each circuit.

\begin{table}[htbp]
\centering
\caption{Experimental parameter sweep.}
\label{tab:sweep}
\resizebox{\columnwidth}{!}{%
\begin{tabular}{ll}
\toprule
\textbf{Parameter} & \textbf{Values} \\
\midrule
Circuit families & IQP, Clifford, Clifford$+T$ \\
Qubit counts & 4--20 \\
Shot budget & $16\nqubits^2$ (quadratic) \\
Circuits per family per qubit count & 1000 \\
Operations per circuit & 1000 \\
Simulator & Qiskit Aer (statevector, GPU) \\
Measurement strategies & $Z$-only, NN, multi-basis, shadows \\
Train/test splits & 10 repeated random 80/20 \\
\bottomrule
\end{tabular}}
\end{table}

\subsection{Shot Budget Design}
\label{sec:shots}

The estimation cost of a statistical feature determines the minimum shot budget needed to estimate it reliably.

\paragraph{Polynomial-cost features}
The Hamming-weight histogram bins all $2^\nqubits$ outcomes into $\nqubits + 1$ weight classes, requiring $O(\nqubits / \varepsilon^2)$ shots.
Single-qubit marginals $P(x_i = 1)$ and parity bias require $O(\nqubits / \varepsilon^2)$ and $O(1/\varepsilon^2)$ shots respectively.
Connected two-qubit Pauli correlations across all $\binom{\nqubits}{2}$ pairs require $O(\nqubits^2 / \varepsilon^2)$ shots in total for any single correlator type (e.g., $ZZ$, $XX$, $YY$).
By the classical shadows result, \emph{all} $6\binom{\nqubits}{2}$ two-qubit Pauli correlators ($ZZ$, $XX$, $YY$, $XY$, $XZ$, $YZ$) can be estimated simultaneously at the same $O(\nqubits^2 / \varepsilon^2)$ cost using the shadow protocol.

\paragraph{Excluded exponential-cost features}
The purpose of this work is to classify quantum circuit families using only
polynomial measurement resources.
If exponential resources were available, full quantum state tomography could exactly
reconstruct the output distribution, rendering classification trivial.
In that vein, several prominent statistical measures are excluded from our feature set
due to their exponential shot requirements.
Shannon entropy~\cite{bromiley2004shannon} $H = -\sum_x \hat{p}(x)\log_2 \hat{p}(x)$
requires $O(2^\nqubits / \nqubits)$ shots because the empirical frequency estimator is biased whenever bitstrings go unobserved.
Collision probability $\mathcal{C} = \sum_x p(x)^2$ requires $\Theta(2^{\nqubits/2})$
shots~\cite{Canonne2020}.
R\'{e}nyi entropies of order $\alpha \neq 1$ require at minimum
$\Theta(2^{\nqubits/2})$ shots for integer $\alpha > 1$ and up to
$\Theta(2^\nqubits)$ shots for non-integer $\alpha > 1$ or $\alpha < 1$~\cite{Archarya2017estimating},
all exponential in the number of qubits $\nqubits$.
Total variation distance to uniformity requires observing a representative fraction
of the $2^\nqubits$ outcomes, also at exponential cost.
All such features are incompatible with the polynomial-resource constraint central
to this work and are therefore disregarded.

\paragraph{Quadratic shot scaling}
The most expensive retained features are the two-qubit Pauli correlators, requiring $O(\nqubits^2)$ shots.
We adopt a quadratic shot scaling law $\shots = \lambda \nqubits^2$, where $\shots$ is the total shot budget, $\nqubits$ is the number of qubits, and $\lambda$ is a constant prefactor controlling the statistical quality of the estimators.
We arbitrarily set $\lambda = 16$, giving 256 shots at $\nqubits = 4$ and 6400 shots at $\nqubits = 20$.
For the multi-basis strategy, each of the three basis runs receives $\shots / 3$ shots; the total shot budget remains $\shots = \lambda\nqubits^2$.

\subsection{Measurement Strategies}
\label{sec:strategies}

We compare four measurement protocols, each operating within the same $\shots = 16\nqubits^2$ total shot budget.

\paragraph{$Z$-only baseline}
All shots are taken in the computational basis.
Features extracted: Hamming-weight histogram, $Z$-basis single-qubit marginals, parity bias, and all-pairs connected $ZZ$ correlations.
This is the minimal polynomial-resource strategy and serves as the baseline.

\paragraph{Multi-basis}
Shots are divided equally among three global basis rotations: identity ($Z$ basis), Hadamard on all qubits ($X$ basis), and $S^\dagger H$ on all qubits ($Y$ basis), with $\shots/3$ shots per basis.
Features extracted: Hamming-weight histogram, $Z$/$X$/$Y$ single-qubit marginals, parity bias, and all-pairs connected $ZZ$, $XX$, and $YY$ correlations.
This gives three of the six Pauli pair correlator types at the same total shot cost, at the expense of $3\times$ fewer shots per basis run.
Mixed-basis correlators ($XY$, $XZ$, $YZ$) are not accessible with global rotations.

\paragraph{Classical shadows}
For each of the $\shots$ measurement shots, a random basis ($Z$, $X$, or $Y$) is independently selected for each qubit, and the circuit is measured after applying the corresponding single-qubit rotation.
All six two-qubit Pauli correlator types ($ZZ$, $XX$, $YY$, $XY$, $XZ$, $YZ$) are estimated from the resulting shadow dataset using the unbiased shadow estimators described in Section~\ref{sec:background}.
This achieves the full Pauli correlation structure at the same total shot budget as the $Z$-only baseline.

\paragraph{Nearest-neighbor $ZZ$}
All shots are taken in the computational basis, identical to the $Z$-only strategy.
However, only the $\nqubits - 1$ adjacent qubit pairs are used for connected $ZZ$ correlations rather than all $\binom{\nqubits}{2}$ pairs.
Features extracted: Hamming-weight histogram, $Z$-basis single-qubit marginals, parity bias, and nearest-neighbor connected $ZZ$ correlations.
This reduces the feature dimension from $O(\nqubits^2)$ to $O(\nqubits)$ and tests whether the discriminative signal is concentrated in local, spatially adjacent correlations or requires long-range pairs.

\subsection{Feature Extraction}
\label{sec:features}

\paragraph{$Z$-only features}
The feature vector concatenates Hamming-weight histogram ($\nqubits+1$ values), $Z$-marginals ($\nqubits$ values), parity bias (1 value), and all-pairs $ZZ$ connected correlations ($\binom{\nqubits}{2}$ values):
\begin{equation}
d_{\text{Z}} = (\nqubits+1) + \nqubits + 1 + \tfrac{\nqubits(\nqubits-1)}{2} = \tfrac{\nqubits^2 + 3\nqubits + 4}{2}.
\end{equation}
This grows as $O(\nqubits^2)$, dominated by the all-pairs $ZZ$ term.

\paragraph{Multi-basis features}
The feature vector extends the $Z$-only set by appending $X$-marginals and all-pairs $XX$ correlations, then $Y$-marginals and all-pairs $YY$ correlations:
\begin{equation}
d_{\text{MB}} = d_{\text{Z}} + 2\nqubits + \nqubits(\nqubits-1) = O(\nqubits^2).
\end{equation}

\paragraph{Shadow features}
The feature vector concatenates $Z$/$X$/$Y$ single-qubit expectations ($3\nqubits$ values) and all six Pauli pair connected correlations across all $\binom{\nqubits}{2}$ pairs ($6 \cdot \binom{\nqubits}{2}$ values):
\begin{equation}
d_{\text{S}} = 3\nqubits + 6 \cdot \tfrac{\nqubits(\nqubits-1)}{2} = 3\nqubits^2.
\end{equation}

\paragraph{Nearest-neighbor $ZZ$ features}
The feature vector matches the $Z$-only set but replaces all-pairs $ZZ$ correlations with only the $\nqubits - 1$ adjacent-pair $ZZ$ correlations:
\begin{equation}
d_{\text{NN}} = (\nqubits+1) + \nqubits + 1 + (\nqubits - 1) = 3\nqubits + 1.
\end{equation}
This feature vector scales as $O(\nqubits)$, linear in the number of qubits, in contrast to the $O(\nqubits^2)$ scaling of the other three strategies.

\subsection{Classification Pipeline}

We evaluate four classifiers from scikit-learn: Logistic Regression (L-BFGS, \texttt{max\_iter}=1000), Decision Tree (unlimited depth), Random Forest (100 estimators), and Support Vector Machine (RBF kernel).
For each qubit count and strategy, accuracy is estimated over 10 repeated random 80/20 train/test splits; the mean is reported and the standard deviation is shown as error bars in all figures.
Each classifier is independently trained and evaluated for each measurement strategy and qubit count.

\paragraph{Logistic Regression}
Logistic Regression models the log-odds of class membership as a linear function of the input features.
Given feature vector $\mathbf{x}$, the probability assigned to class $k$ is
$P(y=k \mid \mathbf{x}) = \mathrm{softmax}(\mathbf{W}\mathbf{x} + \mathbf{b})_k$,
and parameters $\mathbf{W}, \mathbf{b}$ are fit by minimizing cross-entropy via L-BFGS.
The decision boundary is a hyperplane in feature space, making the model fast to train and easy to interpret, but limited to linearly separable classes.

\paragraph{Decision Tree}
A Decision Tree recursively partitions the feature space by selecting, at each node, the feature threshold that maximises the reduction in Gini impurity among the training samples reaching that node.
Predictions for a new sample are made by routing it through the learned splits to a leaf, where the majority class label is returned.
With unlimited depth, the tree can perfectly partition any finite training set, capturing highly non-linear boundaries at the cost of a tendency to overfit.
Unlike Logistic Regression, the decision boundary is piecewise axis-aligned rather than a single hyperplane.

\paragraph{Random Forest}
A Random Forest is an ensemble of decision trees, each trained on an independently drawn bootstrap sample of the training data.
At every split, each tree considers only a random subset of features, decorrelating the trees and reducing variance relative to a single deep tree.
The final prediction is the majority vote across all 100 trees.
Random Forest therefore inherits the non-linear expressivity of individual trees while substantially improving generalisation, at the cost of reduced interpretability compared to a single tree.

\paragraph{Support Vector Machine}
A Support Vector Machine (SVM) finds the maximum-margin separating hyperplane in a (possibly kernel-lifted) feature space.
We use the Radial Basis Function (RBF) kernel $K(\mathbf{x}, \mathbf{x}') = \exp(-\gamma\|\mathbf{x}-\mathbf{x}'\|^2)$, which implicitly maps inputs into an infinite-dimensional space, allowing highly non-linear decision boundaries.
Only the support vectors, the training samples closest to the margin, determine the boundary, making SVMs effective in high-dimensional settings where the number of features is large relative to the number of samples.

\paragraph{Key differences}
The four classifiers span two primary axes of variation.
First, \emph{linearity}: Logistic Regression assumes a linear boundary, while the other three can represent non-linear separations.
Second, \emph{variance--bias trade-off}: a single Decision Tree has low bias but high variance; Random Forest reduces variance through ensemble averaging; SVMs control both through the margin objective and the choice of kernel bandwidth.
Evaluating all four allows us to assess whether the circuit family discrimination task is linearly separable in the chosen feature space, or whether the non-linear structure of the classifiers is necessary at larger qubit counts.

\section{Results}
\label{sec:results}

\subsection{Measurement Strategy Comparison}

Figures~\ref{fig:acc_z_only}--\ref{fig:acc_nn} show classifier accuracy as a function 
of qubit count for each measurement strategy. Error bars show the standard deviation 
over 10 repeated random 80/20 train/test splits. Table~\ref{tab:peak_accuracies} 
summarizes the peak accuracy achieved by each classifier under each measurement strategy.

\begin{table*}[t]
\centering
\caption{Peak classifier accuracy by measurement strategy.}
\label{tab:peak_accuracies}
\begin{tabular}{lcccc}
\toprule
Strategy & Random Forest & Decision Tree & SVM & Logistic Regression \\
\midrule
$Z$-only          & \textbf{0.91} & \textbf{0.80} & 0.52          & 0.39 \\
NN ($ZZ$)         & 0.89          & 0.78          & \textbf{0.56} & 0.36 \\
Multi-basis       & 0.85          & 0.70          & 0.49          & 0.36 \\
Classical shadows & 0.67          & 0.50          & 0.52          & 0.39 \\
\bottomrule
\end{tabular}
\end{table*}

The ordering $Z$-only $\approx$ NN $>$ multi-basis $>$ shadows is consistent across all classifiers and all qubit counts where any strategy achieves above-chance accuracy.
The near-equivalence of $Z$-only and NN is a key finding: reducing the $ZZ$ feature set from all $\binom{\nqubits}{2}$ pairs to only the $\nqubits - 1$ adjacent pairs costs less than $0.02$ in Random Forest accuracy, indicating that the discriminative signal is concentrated in local, nearest-neighbor correlations.
This result inverts the original hypothesis that richer Pauli correlation structure would improve classification.

\begin{figure}[htbp]
    \centering
    \includegraphics[width=\linewidth]{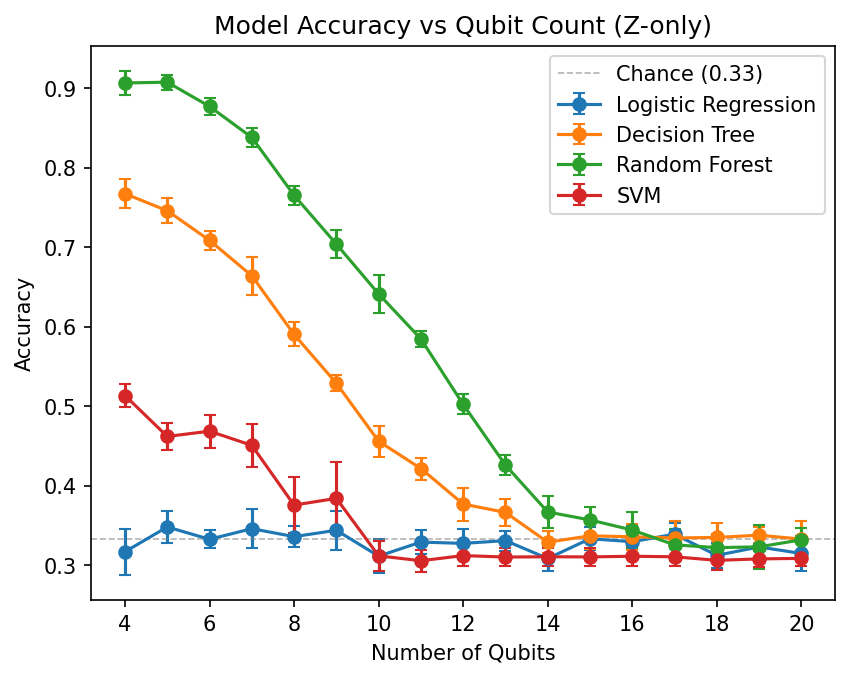}
    \caption{Classifier accuracy vs.\ qubit count under the $Z$-only measurement strategy (4--20 qubits). Error bars show std over 10 repeated train/test splits.}
    \label{fig:acc_z_only}
\end{figure}

\begin{figure}[htbp]
    \centering
    \includegraphics[width=\linewidth]{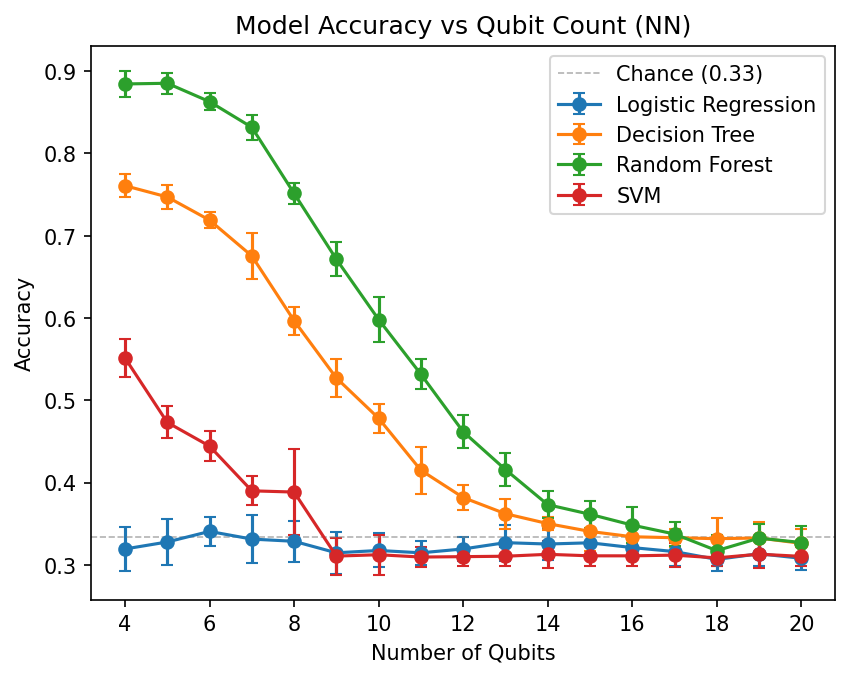}
    \caption{Classifier accuracy vs.\ qubit count under the nearest-neighbor $ZZ$ measurement strategy (4--20 qubits). Error bars show std over 10 repeated train/test splits.}
    \label{fig:acc_nn}
\end{figure}

\begin{figure}[htbp]
    \centering
    \includegraphics[width=\linewidth]{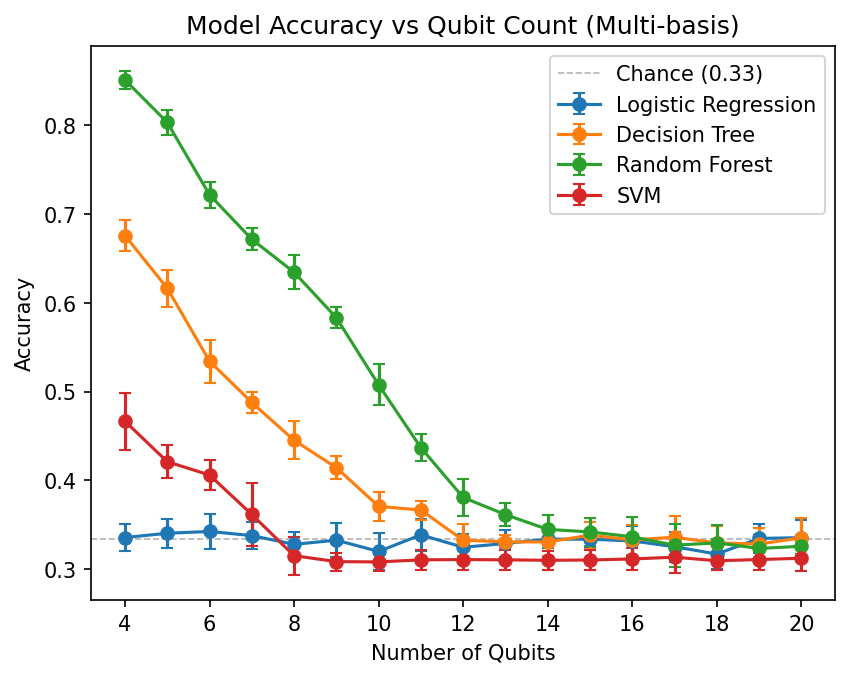}
    \caption{Classifier accuracy vs.\ qubit count under the multi-basis measurement strategy (4--20 qubits). Error bars show std over 10 repeated train/test splits.}
    \label{fig:acc_multi_basis}
\end{figure}

\begin{figure}[htbp]
    \centering
    \includegraphics[width=\linewidth]{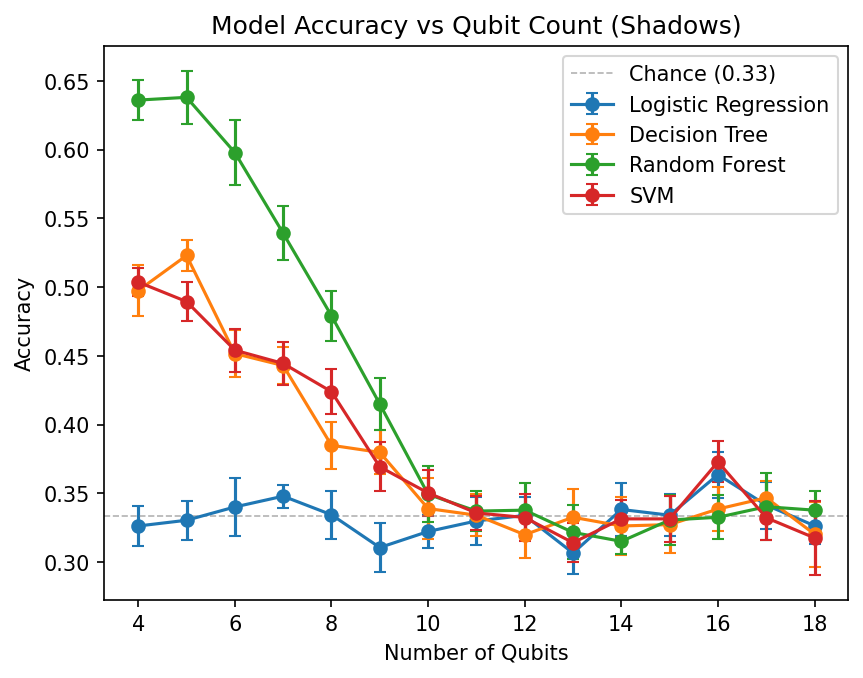}
    \caption{Classifier accuracy vs.\ qubit count under the classical shadows measurement strategy (4--18 qubits). Shadows data is unavailable for 19--20 qubits due to incomplete simulation runs. Error bars show std over 10 repeated train/test splits.}
    \label{fig:acc_shadows}
\end{figure}

\subsection{Scaling Collapse}

All four strategies show a clear degradation in accuracy as qubit count increases, with a collapse to near-chance performance ($\approx 0.33$) above approximately 12--14 qubits.
This collapse is visible in all four classifiers under all four measurement strategies: even the best-performing Random Forest under $Z$-only reaches chance accuracy by 14 qubits.
Below 10 qubits, tree-based methods (Decision Tree and Random Forest) maintain clearly above-chance accuracy under $Z$-only and NN; above 12 qubits no classifier reliably exceeds chance under any strategy.
The collapse is consistent with the quadratic shot budget $\shots = 16\nqubits^2$ becoming insufficient to resolve the class-separating correlations as system size grows: the same constant prefactor $\lambda = 16$ that provides adequate statistics at $\nqubits = 4$ (256 shots) is no longer sufficient at $\nqubits = 14$ (3136 shots) where the number of all-pairs correlators has grown as $\binom{\nqubits}{2}$.
Notably, the NN strategy uses only $O(\nqubits)$ features, so each nearest-neighbor correlator receives more shots per estimate under the same total budget; this does not prevent the scaling collapse, suggesting the bottleneck is distributional overlap between families rather than estimator variance alone.

\subsection{Linear Separability}

Logistic Regression is flat at approximately $0.36$--$0.39$ across all four strategies and all qubit counts, performing at or near the three-class chance level throughout.
This indicates that the class boundary separating IQP, Clifford, and Clifford$+T$ circuits is nonlinear in the Pauli correlator feature space: a single hyperplane cannot separate the three families regardless of how many correlator types are included in the feature vector.
The strong performance of Random Forest and Decision Tree relative to Logistic Regression and SVM confirms that the discriminative structure is captured by axis-aligned partitions in feature space rather than by a linear function of the features.
SVM with the RBF kernel performs between the linear and tree-based methods, indicating that the kernel-induced feature mapping captures some but not all of the relevant nonlinear structure.

\section{Theoretical Framework}
\label{sec:theory}

The empirical ordering $Z$-only $\approx$ NN $>$ multi-basis $>$ shadows could in principle be an artifact of the specific shot budget, gate set, or circuit depth used in our experiments.
In this section we show that the core of this ordering is structural, by proving two exact results.
First, a frame identity for IQP circuits (Lemma~\ref{lem:frame}) implies that \emph{every} purely-$X$ observable of an IQP output state vanishes identically (Proposition~\ref{prop:xnull}): the $X$-sector features purchased by the multi-basis and shadow protocols are deterministic zeros on one of the three classes.
Second, at equal total shot budget, the three global protocols estimate any given $ZZ$ correlator with variances in the exact ratio $1 : 3 : {\geq}9$ ($Z$-only : multi-basis : shadows), as shown in Proposition~\ref{prop:variance_ratio} and Corollary~\ref{cor:variance_ordering}.
Neither result requires assumptions on circuit depth, gate probabilities, or qubit count.
Combined with the empirical observation that the remaining purchased sectors do not compensate (Section~\ref{sec:results}), these results explain the observed ordering; we are explicit at the end of the section about which parts are theorems and which remain empirical.

\subsection{A Frame Identity and the X-Sector Nullity of IQP Circuits}

Every IQP circuit has the form $C = H^{\otimes\nqubits} D\, H^{\otimes\nqubits}$, where $D$ is a product of gates diagonal in the computational basis (here drawn from $\{T, RZ, \mathrm{CZ}\}$).
Because conjugation by the boundary Hadamard layers exchanges $X$- and $Z$-type Paulis, measurements on the output state correspond exactly to observables of the interior state $D|{+}^{\nqubits}\rangle$, where $|{+}^{\nqubits}\rangle = H^{\otimes\nqubits}|0^{\nqubits}\rangle$ and all of the diagonal phase structure resides.

\begin{lemma}[Hadamard Frame Identity]
\label{lem:frame}
Let $C = H^{\otimes\nqubits} D\, H^{\otimes\nqubits}$ with $D$ diagonal in the computational basis, and let $|\psi\rangle = C|0^{\nqubits}\rangle$.
For any nonempty $S \subseteq \{1, \dots, \nqubits\}$, write $Z_S = \prod_{i \in S} Z_i$ and $X_S = \prod_{i \in S} X_i$. Then
\begin{align}
\langle\psi|\, Z_S\, |\psi\rangle &= \langle{+}^{\nqubits}|\, D^\dagger X_S D\, |{+}^{\nqubits}\rangle, \label{eq:frame_z}\\
\langle\psi|\, X_S\, |\psi\rangle &= \langle{+}^{\nqubits}|\, D^\dagger Z_S D\, |{+}^{\nqubits}\rangle. \label{eq:frame_x}
\end{align}
\end{lemma}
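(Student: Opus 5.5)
The plan is a direct computation using only two facts: $H$ is self-adjoint and unitary, and Hadamard conjugation swaps $X$ and $Z$ on each qubit. Since $H^{\otimes\nqubits}$ is real and symmetric, $C^\dagger = H^{\otimes\nqubits} D^\dagger H^{\otimes\nqubits}$. Writing $|\psi\rangle = H^{\otimes\nqubits} D |{+}^{\nqubits}\rangle$ with $|{+}^{\nqubits}\rangle = H^{\otimes\nqubits}|0^{\nqubits}\rangle$, I would get, for any operator $O$,
\begin{equation*}
\langle\psi|\,O\,|\psi\rangle = \langle{+}^{\nqubits}|\, D^\dagger \bigl(H^{\otimes\nqubits} O\, H^{\otimes\nqubits}\bigr) D\, |{+}^{\nqubits}\rangle .
\end{equation*}
This reduces both identities to computing the conjugated observable $H^{\otimes\nqubits} O H^{\otimes\nqubits}$ for $O = Z_S$ and $O = X_S$.

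The key step is the single-qubit relations $HZH = X$ and $HXH = Z$, which I would check directly from the $2\times 2$ matrices. I would then lift them to the tensor product. Factor $H^{\otimes\nqubits} = \bigotimes_i H_i$ and $Z_S = \bigotimes_{i\in S} Z_i \otimes \bigotimes_{i\notin S} I_i$. Conjugation then acts factorwise. On $i\in S$ it sends $Z_i \mapsto X_i$, and on $i\notin S$ it gives $H_i I H_i = I$ because $H^2 = I$. Hence $H^{\otimes\nqubits} Z_S H^{\otimes\nqubits} = X_S$, which yields \eqref{eq:frame_z}. The same argument gives $H^{\otimes\nqubits} X_S H^{\otimes\nqubits} = Z_S$, and hence \eqref{eq:frame_x}.

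I do not expect a genuine obstacle. The only point needing care is the bookkeeping of the adjoint: one must confirm that $(H^{\otimes\nqubits})^\dagger = H^{\otimes\nqubits}$, so that the two outer Hadamard layers combine into a conjugation of the observable and not a product of the form $H O H^\dagger$ with mismatched factors. Diagonality of $D$ is not used here; it only becomes relevant afterwards, in Proposition~\ref{prop:xnull}, where $D^\dagger Z_S D = Z_S$ because diagonal operators commute.
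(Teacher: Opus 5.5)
Your proposal is correct and follows the paper's proof essentially verbatim: rewrite $\langle\psi|O|\psi\rangle$ as $\langle{+}^{\nqubits}|D^\dagger\bigl(H^{\otimes\nqubits} O\, H^{\otimes\nqubits}\bigr)D|{+}^{\nqubits}\rangle$, then apply $H^{\otimes\nqubits} Z_S H^{\otimes\nqubits} = X_S$ and its counterpart with $X$ and $Z$ exchanged. Your factor-by-factor check of the tensor-product conjugation adds a detail the paper leaves implicit. Your remark is also accurate: diagonality of $D$ is not used here, and is only needed in Proposition~\ref{prop:xnull}.
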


\begin{proof}
Using $H^{\otimes\nqubits}|0^{\nqubits}\rangle = |{+}^{\nqubits}\rangle$ and $H^{\otimes\nqubits} Z_S\, H^{\otimes\nqubits} = X_S$,
\begin{align*}
\langle\psi|\, Z_S\, |\psi\rangle
&= \langle 0^{\nqubits}|\, H^{\otimes\nqubits} D^\dagger \bigl(H^{\otimes\nqubits} Z_S\, H^{\otimes\nqubits}\bigr) D\, H^{\otimes\nqubits}\, |0^{\nqubits}\rangle \\
&= \langle{+}^{\nqubits}|\, D^\dagger X_S D\, |{+}^{\nqubits}\rangle.
\end{align*}
Eq.~\eqref{eq:frame_x} follows symmetrically from $H^{\otimes\nqubits} X_S\, H^{\otimes\nqubits} = Z_S$.
\end{proof}

Eq.~\eqref{eq:frame_z} states that the computational-basis statistics of the IQP output are exactly the $X$-sector statistics of the interior diagonal state; indeed, the full $Z$-basis output distribution of $C$ equals the $X$-basis distribution of $D|{+}^{\nqubits}\rangle$.
$Z$-only measurement therefore probes the diagonal phase structure of the IQP family directly.
Eq.~\eqref{eq:frame_x} has a sharper consequence.

\begin{proposition}[$X$-Sector Nullity of IQP Circuits]
\label{prop:xnull}
For every IQP circuit as above and every nonempty $S$,
\[
\langle\psi|\, X_S\, |\psi\rangle = 0.
\]
In particular, all $X$ marginals $\langle X_i\rangle$, all correlators $\langle X_i X_j\rangle$, and all connected correlators $\langle X_i X_j\rangle_c$ of the IQP output state are exactly zero, for every realization of the diagonal layer $D$.
\end{proposition}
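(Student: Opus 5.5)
The proof will start from Eq.~\eqref{eq:frame_x} of Lemma~\ref{lem:frame}, which rewrites the $X$-sector expectation of the IQP output as
\[
\langle\psi|\, X_S\, |\psi\rangle = \langle{+}^{\nqubits}|\, D^\dagger Z_S D\, |{+}^{\nqubits}\rangle .
\]
Everything then rests on a single structural fact: $D$ is diagonal in the computational basis. In this paper $D$ is a product of gates drawn from $\{T, RZ, \mathrm{CZ}\}$. Any global phases, such as those in Qiskit's convention for $RZ$, are irrelevant, since they cancel between $D$ and $D^\dagger$.

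The first step is to note that $Z_S$ is also diagonal in the computational basis, and diagonal matrices commute. Hence $D^\dagger Z_S D = D^\dagger D\, Z_S = Z_S$, using unitarity of $D$. This removes every trace of the circuit realization and leaves $\langle\psi|X_S|\psi\rangle = \langle{+}^{\nqubits}|\, Z_S\, |{+}^{\nqubits}\rangle$.

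The second step evaluates this quantity using the product structure of $|{+}^{\nqubits}\rangle$ and $Z_S$. It factorizes as
\[
\langle{+}^{\nqubits}|\, Z_S\, |{+}^{\nqubits}\rangle = \prod_{i\in S}\langle +|Z|+\rangle \prod_{i\notin S}\langle +|+\rangle .
\]
Since $Z|+\rangle = |-\rangle$ is orthogonal to $|+\rangle$, each factor with $i \in S$ vanishes. As $S$ is nonempty, at least one such factor is present, so the product is zero.

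For the "in particular" clause, the marginals $\langle X_i\rangle$ and the correlators $\langle X_iX_j\rangle$ are the cases $S=\{i\}$ and $S=\{i,j\}$ with $i\neq j$. The connected correlator then satisfies $\langle X_iX_j\rangle_c = \langle X_iX_j\rangle - \langle X_i\rangle\langle X_j\rangle = 0 - 0\cdot 0 = 0$. None of these steps used any property of $D$ beyond diagonality, so the conclusion holds for every realization of the diagonal layer, as claimed.

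There is no genuine obstacle; the only step needing care is the commutation $[D, Z_S]=0$. I would justify it explicitly by writing $D=\sum_x e^{i\phi(x)}|x\rangle\langle x|$ and $Z_S=\sum_x (-1)^{x\cdot 1_S}|x\rangle\langle x|$, both diagonal in the same basis. Doing so makes clear that arbitrary (including random, non-Clifford) phases $\phi$ are allowed.
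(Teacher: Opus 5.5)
Your proposal is correct and matches the paper's proof: you apply Eq.~\eqref{eq:frame_x}, use that $D$ commutes with $Z_S$ (both are diagonal) to reduce to $\langle{+}^{\nqubits}|Z_S|{+}^{\nqubits}\rangle$, note that this vanishes because $\langle +|Z|+\rangle = 0$ on each site of the nonempty $S$, and get the connected-correlator claim termwise. Your explicit diagonal forms of $D$ and $Z_S$ justify the commutation step more carefully than the paper does, but the argument is the same.
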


\begin{proof}
$D$ is diagonal in the computational basis and therefore commutes with $Z_S$. By Eq.~\eqref{eq:frame_x},
\[
\langle\psi|X_S|\psi\rangle = \langle{+}^{\nqubits}|D^\dagger Z_S D|{+}^{\nqubits}\rangle = \langle{+}^{\nqubits}|Z_S|{+}^{\nqubits}\rangle = 0,
\]
since $\langle{+}|Z|{+}\rangle = 0$ on every site of $S$. The statement for connected correlators follows because each term vanishes.
\end{proof}

\begin{remark}[Scope of the Nullity Result]
\label{rem:nullity}
Proposition~\ref{prop:xnull} constrains only the pure-$X$ sector; $Y$-type and mixed-basis correlators of IQP outputs are not identically zero in general.
Nor does it say that $X$-sector features are useless in principle: a feature that is deterministically $0$ on IQP but nonzero on Clifford or Clifford$+T$ outputs could in principle help separate the classes.
What it establishes is that, on the IQP class, every shot spent estimating an $X$-sector feature measures a known constant and therefore contributes sampling noise but no information about the circuit instance.
Whether the nonzero $X$- and $Y$-sector values of the other two families add discriminative power beyond the $Z$ sector is an empirical question, and Table~\ref{tab:peak_accuracies} answers it in the negative for this task: the two protocols that pay for broader sector access with a reduced effective $Z$-sector budget (multi-basis and shadows) underperform $Z$-only across all classifiers.
Proposition~\ref{prop:xnull} also yields a falsifiable consistency check on the measurement pipeline: estimated $X$ marginals and $XX$ correlators of simulated IQP circuits must be statistically indistinguishable from zero at every qubit count.
\end{remark}

\subsection{Estimator Variance at Equal Shot Budget}

We now quantify the statistical price at which each protocol estimates the $ZZ$ correlators that carry the discriminating signal.
(The NN strategy uses the identical per-shot estimator as $Z$-only on a subset of pairs and needs no separate analysis.)
The $3^k$-type sampling overhead of uniformly random Pauli shadows for a fixed $k$-local observable is well known and motivates derandomized and locally biased shadow schemes~\cite{huang2020predicting,hadfield2022measurements}; Proposition~\ref{prop:variance_ratio} records the exact constants for the three protocols compared in this work.

\begin{proposition}[Per-Correlator Variance at Equal Budget]
\label{prop:variance_ratio}
Fix a qubit pair $(i, j)$, let $x = \langle Z_i Z_j\rangle^2$, and give each protocol the same total budget of $s$ shots.
Each protocol admits an unbiased estimator of $\langle Z_i Z_j\rangle$ with variance
\[
\mathrm{Var}^{(Z)} = \frac{1 - x}{s}, \quad
\mathrm{Var}^{(\mathrm{MB})} = \frac{3(1 - x)}{s}, \quad
\mathrm{Var}^{(S)} = \frac{9 - x}{s},
\]
so that
\[
\mathrm{Var}^{(Z)} : \mathrm{Var}^{(\mathrm{MB})} : \mathrm{Var}^{(S)} \;=\; 1 : 3 : \frac{9 - x}{1 - x},
\qquad \frac{9 - x}{1 - x} \geq 9,
\]
with equality on the right if and only if $x = 0$.
\end{proposition}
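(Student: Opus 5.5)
The plan is to write each protocol's estimator as an average of i.i.d.\ single-shot random variables, compute the single-shot variance, and divide by the number of shots the protocol spends on the $ZZ$ correlator. Throughout, write $\mu = \langle Z_i Z_j\rangle$, so $x = \mu^2$. The key fact is that any $\pm 1$-valued random variable with mean $\mu$ has second moment $1$, hence variance $1-\mu^2 = 1-x$.

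\emph{$Z$-only.} Each shot yields bits $b_i, b_j$, and $(-1)^{b_i+b_j}$ is an unbiased single-shot estimator of $\mu$. It is $\pm1$-valued, so its variance is $1-x$. Averaging over $s$ independent shots gives $(1-x)/s$.

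\emph{Multi-basis.} Only the $s/3$ shots in the $Z$-basis run inform $\langle Z_iZ_j\rangle$. The natural estimator is the empirical mean of $(-1)^{b_i+b_j}$ over those shots. By the same computation its variance is $(1-x)/(s/3) = 3(1-x)/s$. I would remark that the $X$- and $Y$-basis runs measure $X_iX_j$ and $Y_iY_j$, which do not determine $Z_iZ_j$. Combining them therefore cannot reduce variance for a general state, so this is the protocol's estimator as defined in Section~\ref{sec:strategies}.

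\emph{Shadows.} Use the product estimator from Eq.~\eqref{eq:shadow_estimator}:
\[
\hat e = 9\,(-1)^{b_i+b_j}\,\mathbf 1[U_i = Z,\ U_j = Z].
\]
The basis choices are independent and uniform, so the indicator has probability $1/9$. Conditional on it, the outcome has mean $\mu$, which gives $\mathbb E[\hat e] = 9\cdot\tfrac19\mu = \mu$, so the estimator is unbiased. Its square is $81\,\mathbf 1[\cdot]$, so $\mathbb E[\hat e^2] = 81/9 = 9$. The single-shot variance is then $9 - x$, and averaging over $s$ independent shots gives $(9-x)/s$.

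\emph{Ratios and inequality.} Dividing the three variances gives the ratios $1 : 3 : (9-x)/(1-x)$, which requires $x<1$. The inequality $(9-x)/(1-x)\ge 9$ is equivalent to $9-x \ge 9-9x$, that is, $8x\ge0$. Equality holds exactly when $x=0$.

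The main obstacle is the edge case $x=1$, where $\mathrm{Var}^{(Z)}=\mathrm{Var}^{(\mathrm{MB})}=0$ and the ratio is undefined. I would state the ratio for $x<1$ and interpret $x=1$ as $\mathrm{Var}^{(S)}>0$ while the other two variances vanish. I would also note the mild issue that $s/3$ need not be an integer, and handle it by assuming $3\mid s$.
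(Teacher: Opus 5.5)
Your proposal is correct and follows the paper's proof essentially step for step. It uses a per-shot $\pm1$ estimator with variance $1-x$ averaged over $s$ (respectively $s/3$) shots, the shadow product estimator with hit probability $\tfrac19$ and second moment $9$, and the ratio bound via $8x \geq 0$; your added caveats (the ratio being defined only for $x<1$, and assuming $3 \mid s$) are sensible refinements the paper leaves implicit.
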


\begin{proof}
Under $Z$-only, each of the $s$ computational-basis shots yields the estimator $\hat{o}^{(Z)}_t = (-1)^{b_i + b_j}$ with $(\hat{o}^{(Z)}_t)^2 = 1$, so the per-shot variance is $1 - x$ and the $s$-shot mean has variance $(1 - x)/s$.
Under multi-basis, the identical estimator is available only on the $s/3$ shots taken in the $Z$ basis, giving variance $3(1 - x)/s$.
Under the random-Pauli shadow protocol of Eq.~\eqref{eq:shadow_estimator}, the per-shot estimator is $\hat{o}^{(S)}_t = 9\,(-1)^{b_i + b_j}\, \mathbf{1}[U_i = Z]\, \mathbf{1}[U_j = Z]$, nonzero with probability $\tfrac{1}{9}$; it is unbiased with $\mathbb{E}[(\hat{o}^{(S)}_t)^2] = 81 \cdot \tfrac{1}{9} = 9$, so the per-shot variance is $9 - x$ and the $s$-shot variance is $(9 - x)/s$.
The ratio bound follows from $9 - x \geq 9(1 - x) \Leftrightarrow 8x \geq 0$.
\end{proof}

Analogous factors hold for the single-qubit terms $\langle Z_i \rangle$ entering the connected correlators, with variance ratios $1 : 3 : {\geq}3$.

\begin{corollary}[Unconditional Variance Ordering]
\label{cor:variance_ordering}
For every output state and every qubit pair, at equal total shot budget,
\[
\mathrm{Var}^{(Z)} \;\leq\; \mathrm{Var}^{(\mathrm{MB})} \;\leq\; \mathrm{Var}^{(S)},
\]
with $\mathrm{Var}^{(\mathrm{MB})}/\mathrm{Var}^{(Z)} = 3$ exactly, $\mathrm{Var}^{(S)}/\mathrm{Var}^{(Z)} \geq 9$, and $\mathrm{Var}^{(S)}/\mathrm{Var}^{(\mathrm{MB})} \geq 3$.
The ordering requires no assumption on the circuit family, and a large correlator only widens the shadow penalty: $(9 - x)/(1 - x)$ is increasing in $x$ and diverges as $x \to 1$.
\end{corollary}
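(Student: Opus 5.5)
The corollary follows directly from Proposition~\ref{prop:variance_ratio}, so the plan is to read off the three variance formulas and compare them, treating $x = \langle Z_i Z_j\rangle^2 \in [0,1]$ as the only parameter. First I will note that $x \in [0,1]$ holds for every state, because $Z_i Z_j$ has eigenvalues $\pm 1$ and hence $|\langle Z_i Z_j\rangle| \le 1$. This is the only property of the output state the argument uses, which is why no assumption on the circuit family is needed.

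Next I will establish the two inequalities and the ratios.
\begin{itemize}
\item The exact ratio $\mathrm{Var}^{(\mathrm{MB})}/\mathrm{Var}^{(Z)} = 3$ is immediate whenever $x<1$.
\item The bound $\mathrm{Var}^{(S)}/\mathrm{Var}^{(Z)} \ge 9$ is the ratio bound already proved in Proposition~\ref{prop:variance_ratio}.
\item For $\mathrm{Var}^{(S)}/\mathrm{Var}^{(\mathrm{MB})} = (9-x)/(3(1-x))$, the bound $\ge 3$ reduces to $9 - x \ge 9 - 9x$, i.e.\ $8x \ge 0$.
\end{itemize}
The chain $\mathrm{Var}^{(Z)} \le \mathrm{Var}^{(\mathrm{MB})} \le \mathrm{Var}^{(S)}$ then follows by comparing numerators over the common denominator $s$: $1-x \le 3(1-x)$ because $1-x\ge 0$, and $3(1-x) \le 9-x$ because this is $6 + 2x \ge 0$.

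For monotonicity I will write $f(x) = (9-x)/(1-x) = 1 + 8/(1-x)$. This form is visibly increasing on $[0,1)$ and tends to $+\infty$ as $x \to 1^-$.

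The one point needing care is the degenerate case $x = 1$, where $\mathrm{Var}^{(Z)} = \mathrm{Var}^{(\mathrm{MB})} = 0$ and the ratios are undefined. I will state the ratio claims for $x<1$ and note that the inequality chain itself still holds at $x=1$, reading $0 \le 0 \le 8/s$. I will also remark that the multi-basis formula assumes $s$ is divisible by $3$, as in the protocol definition, so no rounding correction is needed. I do not expect any other obstacle.
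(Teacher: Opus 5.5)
Your proposal is correct and follows the paper's route: the corollary is read off from the three variance formulas of Proposition~\ref{prop:variance_ratio}, and the key comparison reduces to $8x \ge 0$, exactly as in that proposition's proof. Your explicit treatment of $x = 1$ is a worthwhile addition the paper omits: there the ratios are undefined, but the chain still reads $0 \le 0 \le 8/s$. This case is not exotic, since stabilizer outputs of Clifford circuits can have $\langle Z_i Z_j\rangle = \pm 1$.
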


Together, Proposition~\ref{prop:xnull} and Corollary~\ref{cor:variance_ordering} account for the observed ordering as follows.
Relative to $Z$-only, the multi-basis and shadow protocols accept a $3\times$ and ${\geq}9\times$ variance penalty on every $ZZ$ feature in exchange for access to additional Pauli sectors.
On the IQP class the entire purchased $X$ sector is identically zero (Proposition~\ref{prop:xnull}), and the remaining purchased sectors are empirically uninformative for this task (Table~\ref{tab:peak_accuracies} and Figures~\ref{fig:acc_multi_basis}--\ref{fig:acc_shadows}).
We emphasize the division of labor: the variance ordering and the $X$-sector nullity are theorems; the claim that no purchased sector compensates for the variance penalty is exact for the $X$ sector on the IQP class and empirical for the remainder.
These results explain the ranking of the four strategies evaluated here; they do not assert that $Z$-only measurement is optimal among all polynomial-resource protocols.

\section{Discussion}
\label{sec:discussion}

\subsection{Why Z-Only and NN Outperform Multi-Basis and Shadows}

The central finding --- that $Z$-only outperforms both multi-basis and classical shadows --- is explained by the structure of IQP circuits and formalized by the theoretical framework in Section~\ref{sec:theory}.
IQP circuits are defined by diagonal gates in the computational basis sandwiched between Hadamard layers; by Lemma~\ref{lem:frame}, the computational-basis statistics of an IQP output state coincide exactly with the $X$-sector statistics of the interior diagonal state, so $Z$-basis measurement probes the diagonal phase structure directly.
The all-pairs connected $ZZ$ correlations captured by the $Z$-only strategy therefore probe precisely the interaction structure that defines the IQP family.

Classical shadows, by contrast, dilutes the $Z$-basis shot budget by a factor of approximately $1/3$ per qubit per shot: in the random single-qubit Clifford protocol, each qubit is measured in the $Z$ basis only one-third of the time on average, so each $ZZ$ correlator estimate is derived from roughly $\shots/9$ shots rather than $\shots$ shots.
By Proposition~\ref{prop:variance_ratio}, this is not merely a practical inconvenience but a guaranteed variance penalty: the shadow estimator's per-shot variance for any $ZZ$ correlator is at least $9\times$ that of the $Z$-only estimator, with the penalty largest precisely when the correlator is large.
The shadow protocol compensates by simultaneously estimating five additional correlator types ($XX$, $YY$, $XY$, $XZ$, $YZ$), but the $XX$ correlators among them are identically zero on the IQP class (Proposition~\ref{prop:xnull}), and the remaining types are empirically uninformative for the IQP/Clifford/Clifford$+T$ classification task (Table~\ref{tab:peak_accuracies}).
The result is that shadows pays a statistical cost relative to $Z$-only without receiving a compensating accuracy benefit.
Multi-basis measurements sit between the two: $ZZ$ correlators receive $\shots/3$ shots rather than $\shots$, and the additional $XX$ and $YY$ correlators are similarly uninformative, yielding performance intermediate between $Z$-only and shadows.

\subsection{Nearest-Neighbor \texorpdfstring{$ZZ$}{ZZ} and the Locality of the Discriminative Signal}

The near-equivalence of $Z$-only and NN is the most surprising finding of this comparison.
$Z$-only computes all $\binom{\nqubits}{2}$ pairwise $ZZ$ correlations ($O(\nqubits^2)$ features), while NN retains only the $\nqubits - 1$ adjacent-pair correlations ($O(\nqubits)$ features), yet the Random Forest accuracy gap is less than $0.02$ across the full qubit range.
This indicates that the discriminative signal distinguishing IQP, Clifford, and Clifford$+T$ families is predominantly carried by local, nearest-neighbor $ZZ$ correlations; the long-range pairs included in the $Z$-only feature set contribute negligible additional classification power under this shot budget.

The locality of the signal also has a practical implication: under the same shot budget $\shots = 16\nqubits^2$, each NN correlator estimate is derived from far more shots than each all-pairs correlator estimate in the $Z$-only strategy, since the budget is spread over only $\nqubits - 1$ pairs rather than $\binom{\nqubits}{2}$.
The fact that $Z$-only nonetheless retains a slight edge suggests that the long-range correlations, though individually weak, contribute a small net positive signal rather than acting as pure noise.

\subsection{Why Logistic Regression Fails}

Logistic Regression's flat performance at near-chance accuracy across all strategies indicates that the three circuit families are not linearly separable in the Pauli correlator feature space.
Tree-based methods (Random Forest, Decision Tree) substantially outperform both Logistic Regression and SVM because axis-aligned recursive partitioning can capture the non-linear class boundaries that separate IQP, Clifford, and Clifford$+T$ distributions in this feature space.
The failure of SVM (RBF kernel) to match Random Forest suggests that the discriminative structure is not well-approximated by radial similarity in the correlator space, but is instead captured by threshold conditions on individual features or small feature subsets --- precisely the type of structure that decision trees exploit.

\subsection{Scaling Collapse and Shot Budget}

The collapse of all strategies to chance accuracy above approximately 12--14 qubits reflects a fundamental tension between the quadratic shot budget and the growing dimensionality of the feature space.
As $\nqubits$ increases, the number of all-pairs $ZZ$ correlators grows as $\binom{\nqubits}{2} \approx \nqubits^2/2$, while the shot budget grows as $16\nqubits^2$.
The ratio of shots per correlator therefore remains roughly constant at $\lambda = 16$, but the absolute number of shots per correlator ($\approx 32$ at all qubit counts) is insufficient to resolve the increasingly subtle distributional differences between circuit families as system size grows.
Increasing the prefactor $\lambda$ beyond 16 would shift the collapse threshold to larger qubit counts, at the cost of a proportionally larger total shot budget.

\subsection{Limitations}

This study uses noiseless classical simulation.
Real device noise will alter the output distributions of all three circuit families and likely degrade classification accuracy, potentially in family-dependent ways.
Extending to noisy simulation via Aer noise models, or to real hardware, is a natural next step.

The shot budget prefactor $\lambda = 16$ is heuristic.
A principled choice requires knowing the signal-to-noise ratio of the class-separating correlators, which depends on circuit family, qubit count, and gate depth, and is not known a priori.

The scaling collapse above 12 qubits establishes that a quadratic shot budget $\shots = 16\nqubits^2$ is insufficient for reliable classification at large qubit counts, but this result does not rule out the existence of a more efficient polynomial-resource strategy.
A subquadratic shot budget, a different feature representation, or an adaptive measurement protocol could in principle recover discriminative power at larger system sizes; this work makes no claim about the fundamental hardness of the classification task, only about the specific strategies evaluated here.

The theoretical results of Section~\ref{sec:theory} explain the relative ordering of the four strategies evaluated here; they do not establish optimality of $Z$-only measurement among all polynomial-resource protocols, and the claim that the non-$Z$ Pauli sectors carry no compensating signal is proven only for the $X$ sector on the IQP class and is otherwise empirical.

The Clifford$+T$ gate-selection probabilities (Table~\ref{tab:cliffordT_gate_distros}) fix the $T$-gate density at $0.4$; classification accuracy as a function of this density was not swept and is left to future work.

Classical shadows results are unavailable for 19--20 qubits due to incomplete HPC simulation runs.
The missing data points fall in the regime where all strategies already operate at near-chance accuracy; the scaling collapse characterization is therefore not materially affected by this gap.

\section{Future Work}
\label{sec:future}

Several directions follow naturally from this work.

\paragraph{Theoretical Backing for Identifying IQP}
While some obvious extensions of our work are described below, the most pressing question inspired by our work is the following. 
``Does there exist a BPP algorithm that, given sample access to an unknown distribution, decides with bounded error whether that distribution could have been generated 
by a polynomial-size IQP circuit, as opposed to another class of circuits?'' The collapse to chance accuracy around the 12 qubit mark hints 
that the answer may be no, but our approach is merely one algorithm among many possible polynomial-resource strategies, and the question of whether any 
such strategy exists remains open. Also, the work by Aaronson and Arkhipov~\cite{aaronson2011computational} suggests that the answer may be no. We note, however, 
that Aaronson and Arkhipov consider the harder problem of verifying a device is sampling from the full BosonSampling distribution against arbitrary alternatives. 
Our question, ``distinguishing IQP from specific classically simulable families,'' is more structured, and it remains open whether this additional structure 
makes BPP verification possible or whether the underlying sampling hardness of IQP still precludes it.

\paragraph{Extended circuit families}
IQP circuits are defined by diagonal gates in the $Z$ basis; one can analogously define circuit families diagonal in the $X$ or $Y$ basis by conjugating the diagonal layer with $H^{\otimes \nqubits}$ or $(SH)^{\otimes \nqubits}$, respectively.
These families would carry qualitatively different distributional signatures under $Z$-basis measurement and could serve as additional classification targets, extending the four-strategy comparison evaluated here.
The two-line argument of Lemma~\ref{lem:frame} carries over in the rotated frame, yielding the corresponding sector-nullity statement for each such family and making these extensions a natural testbed for the framework of Section~\ref{sec:theory}.

\paragraph{Noisy simulation and hardware}
All experiments here use noiseless classical simulation.
Realistic device noise will alter the output distributions of all three circuit families in ways that may depend on family identity, qubit count, and gate depth.
Extending to Aer noise models or real hardware would establish whether the $Z$-only advantage and the scaling collapse threshold persist under realistic conditions.

\paragraph{Shot budget scaling}
The collapse to chance accuracy above approximately 12 qubits is tied to the fixed prefactor $\lambda = 16$.
Increasing $\lambda$ would shift the collapse threshold to larger qubit counts; a principled choice of $\lambda$ requires understanding the signal-to-noise ratio of the class-separating correlators as a function of system size, which remains an open problem.
Another approach is to increase the exponent of $\nqubits$ such that the shot scaling is no longer quadratic, but perhaps cubic or quartic. As all of the 
features grow quadratically, this could ensure an appropriate (polynomial) number of shots are used to accurately observe features.

\section{Conclusion}
\label{sec:conclusion}

We compared four polynomial-resource measurement strategies: $Z$-basis-only, nearest-neighbor $ZZ$, multi-basis ($Z$, $X$, $Y$), and classical shadows for classifying IQP, Clifford, and Clifford$+T$ circuit families across 4--20 qubits under a quadratic shot budget $\shots = 16\nqubits^2$.

Contrary to our initial hypothesis, $Z$-only measurements outperform multi-basis and classical shadows across all qubit counts and all four classifiers evaluated, and the $O(\nqubits)$-feature nearest-neighbor $ZZ$ (NN) strategy matches $Z$-only to within $0.02$ in Random Forest accuracy.
Peak Random Forest accuracies are $0.91$ ($Z$-only), $0.89$ (NN), $0.85$ (multi-basis), and $0.67$ (classical shadows).
The near-equivalence of $Z$-only and NN is the most striking finding: reducing the feature set from all $\binom{\nqubits}{2}$ pairwise correlators to only the $\nqubits - 1$ adjacent-pair correlators costs less than $0.02$ in accuracy, indicating that the discriminative signal is spatially local.

We provide a formal theoretical explanation in Section~\ref{sec:theory}. Lemma~\ref{lem:frame} and Proposition~\ref{prop:xnull} show that every purely-$X$ observable of an IQP output state vanishes identically, so the $X$-sector features accessed by multi-basis and shadow measurements are deterministic zeros on the IQP class. Proposition~\ref{prop:variance_ratio} and Corollary~\ref{cor:variance_ordering} establish that, at equal total shot budget, the multi-basis and shadow protocols estimate each $ZZ$ correlator with exactly $3\times$ and at least $9\times$ the variance of $Z$-only measurement, a penalty that only grows with the correlator magnitude. Together these exact results make the observed ordering a consequence of the algebraic structure of IQP circuits and of the estimators themselves, rather than an artifact of the experimental setup, while asserting no optimality claim beyond the four strategies evaluated.

All strategies collapse to near-chance accuracy ($\approx 0.33$) above approximately 12--14 qubits under the quadratic shot budget, and Logistic Regression performs at chance across all strategies, confirming that the class boundaries are nonlinear in the Pauli correlator feature space. Whether any polynomial measurement protocol can classify these families reliably at large qubit counts remains an open question.

\section*{Acknowledgments}
The authors thank the Virginia Modeling, Analysis and Simulation Center (VMASC), the 
Center for Secure and Intelligent Critical Systems (CSICS), and Old Dominion University for institutional support. 
This research was supported by the Research Computing clusters at Old Dominion University.
The Wahab cluster is supported in part by National Science Foundation grant CNS-1828593, ``MRI Acquisition: A Reconfigurable Computing Infrastructure Enabling Interdisciplinary and Collaborative Research in Hampton Roads.''
Claude was used in preparing this manuscript to check for formatting and to proofread for grammar and clarity, ensuring overall readability and coherence of the presentation.
Claude Code was used to assist in writing the theoretical framework section, providing a structured outline and initial drafts of the proofs, which were then refined and verified by the authors.
Claude Code was also used to generate the LaTeX code for the figures and tables, ensuring consistency in formatting and style across the manuscript.
The authors declare no competing interests.
\bibliographystyle{IEEEtran}
\bibliography{references}

\end{document}